\documentclass{article}

\usepackage{arxiv}

\usepackage[utf8]{inputenc} 
\usepackage[T1]{fontenc}    
\usepackage{hyperref}       
\usepackage{url}            
\usepackage{booktabs}       
\usepackage{amsfonts}       
\usepackage{nicefrac}       
\usepackage{microtype}      
\usepackage{lipsum}
\usepackage{graphicx}
\usepackage{wrapfig}
\usepackage{amsthm}
\usepackage[ruled,linesnumbered]{algorithm2e}

\newcommand{\RR}{\ensuremath{\mathbb{R}}}
\newtheorem{definition}{Definition}
\newtheorem{theorem}{Theorem}
\newcommand{\dist}{\ensuremath{{\textit dist}}}
\newcommand{\MaxDist}{\ensuremath{\textit{MaxDist}}}

\newcommand{\MinDist}{\ensuremath{\textit{MinDist}}}

\newtheorem{example}{Example}

\title{Complete and Sufficient Spatial Domination of Multidimensional Rectangles}

\author{
 Tobias Emrich \\
  Harman International \\
  \texttt{Tobias.Emrich@harman.com} \\
   \And
 Hans-Peter Kriegel\\
  Ludwig Maximilian University of Munich \\
  \texttt{kriegel@dbs.ifi.lmu.de} \\
   \And
 Andreas Z\"ufle \\
  George Mason University\\
  \texttt{azufle@gmu.edu} \\
   \And
 Peer Kr\"oger\\
  Ludwig Maximilian University of Munich\\
  \texttt{kroeger@dbs.ifi.lmu.de} \\
   \And
 Matthias Renz\\
  Christian-Albrechts-Universität zu Kiel\\
  \texttt{mr@informatik.uni-kiel.de} \\
  }


\begin{document}
\maketitle

\begin{abstract}
Rectangles are used to approximate objects, or sets of objects, in a plethora of applications, systems and index structures. Many tasks, such as nearest neighbor search and similarity ranking, require to decide if objects in one rectangle $A$ may/must/must not be closer to objects in a second rectangle $B$, than objects in a third rectangle $R$. To decide this relation of ``Spatial Domination'' it can be shown that using minimum and maximum distances it is often impossible to detect spatial domination. This spatial gem provides a necessary and sufficient decision criterion for spatial domination that can be computed efficiently even in higher dimensional space. In addition, this spatial gem provides an example, pseudocode and an implementation in Python.
\end{abstract}


\section{Introduction}
Minimal bounding rectangles (MBRs) are used as
object approximations in a plethora of different
applications. For example, MBRs are used to bound spatially extended objects and MBRs are used as spatial key for spatial access methods such as the R-Tree \cite{Gut84,BecKriSchSee90}. MBR
approximations have also become very popular for uncertain
data\-bases \cite{BesSolIly08,ChenCheng07,CCMC08,Ckp04,LianChen09}
 to approximate all possible locations of an uncertain object.
 
\begin{wrapfigure}{r}{0.5\textwidth}  
    \centering
        \includegraphics[width=0.4\columnwidth]{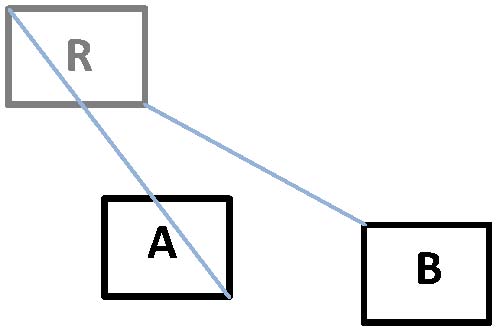}
    \caption{Spatial pruning on MBRs.}
    \label{fig:basic}
\end{wrapfigure}

Rectangular approximations are commonly integrated into spatial query processing as a filter-step, to identify true hits and true drops efficiently based on rectangular approximations only. 
Most types of spatial/similarity queries, including $k$-nearest neighbor ($k$NN).
queries, reverse $k$-nearest neighbor (R$k$NN) queries, and
ranking queries, commonly require the following information. Given
three rectangles $A$, $B$, and $R$ in a multi-dimensional space
$\RR^d$, the task is to determine whether
object $A$ is definitely closer to $R$ than $B$ w.r.t. a distance
function defined on the objects in $\RR^d$. If this is the case,
we say $A$ \emph{dominates} $B$ w.r.t.\ $R$. An example of such a
situation is depicted in Figure \ref{fig:basic}. This concept of domination is a
central problem to identify true hits and true drops
(pruning). For example, in case of a $1$NN query around $R$, we
can prune $B$ if it is dominated by $A$ w.r.t.\ $R$ and for an
R$1$NN query around $R$, we can prune $B$ if $A$ dominates $R$
w.r.t.\ $B$.

The domination problem is trivial for point objects. However,
applied to rectangles, the domination problem is much more
difficult to solve. Traditionally, the minimal distance and
maximal distance between rectangles (min/max-dist) are used to decide which
object is closer to another object: If the maximum distance between $R$ and $A$ is lower than the minimum distance between $R$ and $B$, then, for any points $r\in R$, $a\in A$ and $b\in B$, it must hold that $a$ is closer to $r$ than $b$. While this implication is correct, the backward direction does not hold. Thus, min/max-dist provides a sufficient but not a complete decision criterion. To illustrate this problem, consider the example shown in Figure \ref{fig:minmax}. In this example, we can guarantee that point $a$ spatially dominates point $b$ with respect to $R$. That is, any point in $R$ must be closer to point $a$ than to point $b$, as any point in $R$ is located on the same side of the equi-distance line $H_{ab}$ (or Voronoi-line) between $a$ and $b$. Yet, in this example, we still have $MaxDist(a,R)>MinDist(b,R)$, thus the min/max-dist approach does not allow to identify this spatial domination. The error of min/max-dist is incurred by using two different locations of $R$ for the computation of mindist and maxdist. 

Another existing method for detecting spatial domination on rectangle has been proposed in~\cite{emrich2009incremental} which exploits convexness of rectangles to check only the combination of corners of all three rectangles $A$, $B$, and $R$. The drawback of this approach is the high run-time, which scale exponentially in the dimensionality of the data space.

This spatial gem revisits a necessary and sufficient decision criterion for spatial domination of rectangles \cite{emrich2010boosting} that allows to efficiently scale to high dimensionality. We summarize the theory, describe application, illustrate examples and provides an implementation in Python.

\begin{figure}
    \centering
        \includegraphics[width=0.75\columnwidth]{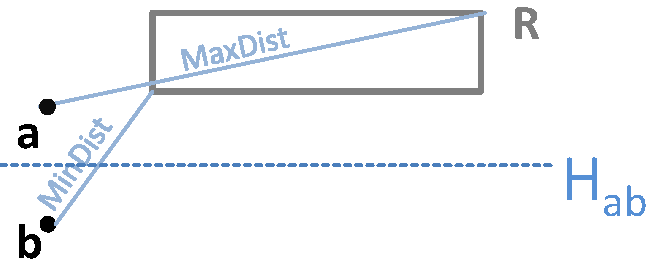}
    \caption{MBR pruning example: Incompleteness of Min/Max}
    \label{fig:minmax}
\end{figure}

\section{Necessary and Sufficient Domination of Rectangles}\label{sec:dom}
The problem of spatial domination is formally defined as follows.

\begin{definition}[Domination]\label{def:domination}
Let $A,B,R\subseteq\RR^d$ be rectangles and $\dist$ be an $L_P$ norm to measure distance between points in $\RR^d$ such as Euclidean distance ($P=2$) and Manhattan distance ($P=1$). The rectangle $A$
dominates $B$ w.r.t.\ $R$ iff for all points $r\in R$ it holds that every point
$a \in A$ is closer to $r$ than any point $b \in B$, i.e.
\begin{equation}\label{eq:def}
Dom_R(A,B) \Leftrightarrow \forall r \in R, \forall a \in A, \forall b \in B: \dist(a, r) < \dist(b, r)
\end{equation}
\end{definition}
Equation \ref{eq:def} enumerates an (uncountably) infinite number of triples of possible point locations and thus, cannot be computed directly in this form. 
An decision criterion that is complete, sufficient, and can be computed in $O(d)$ time has been proposed in \cite{emrich2010boosting} as follows:

\begin{theorem}[Complete and Sufficient Domination]\label{the:decision}
Let $A,B,R\subseteq\RR^d$ be rectangles and $\dist$ be an $L_P$ norm to measure distance between points in $\RR^d$. Further, let $\MinDist(X,x)$ and $\MaxDist(X,x)$ denote the minimum and maximum distance between a (one-dimenstional) interval $X=[X^{min},X^{max}]$ and a scalar $x$, respectively:

\begin{minipage}{.48\linewidth}
$$
MinDist(X,x) = \left\{\begin{array}{ll}
        0, & \mbox{for } x\in X\\
        X^{min}-x, & \mbox{for } x<X^{min}\\
        x-X^{max}, & \mbox{for } x>X^{max}
        \end{array}\right\}
$$
\end{minipage}
\begin{minipage}{.48\linewidth}
$$
MaxDist(X,x) = \max(|x-X^{min}|,|x-X^{max}|)
$$
\end{minipage}
Then, the following equivalence holds:
\begin{equation}\label{eq:criterion}
Dom_R(A,B) \Leftrightarrow \sum_{i=1}^d \max\limits_{r\in \{R_i^{min}, R_i^{max}\}}(\MaxDist
(A_i,r)^P - \MinDist (B_i,r)^P) < 0
\end{equation}
\end{theorem}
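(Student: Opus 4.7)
The plan is to reformulate $Dom_R(A,B)$ as a single sup that matches the RHS of Equation \ref{eq:criterion}, then exploit separability of the $L_P$ norm and a one-dimensional endpoint lemma. First I would observe that for $P\ge 1$ the map $t\mapsto t^P$ is strictly increasing on $[0,\infty)$, so $\dist(a,r)<\dist(b,r)$ iff $\dist(a,r)^P<\dist(b,r)^P$, and since $R,A,B$ are compact the universal quantifier in Definition \ref{def:domination} can be turned into a single maximum. Thus
\begin{equation*}
Dom_R(A,B)\Leftrightarrow \max_{r\in R,\,a\in A,\,b\in B}\sum_{i=1}^d\bigl(|a_i-r_i|^P-|b_i-r_i|^P\bigr)<0.
\end{equation*}

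Next I would use separability: because $R=\prod_i R_i$, $A=\prod_i A_i$, $B=\prod_i B_i$ are axis-parallel products and the objective is a sum of per-coordinate terms, the max commutes with the sum, giving $\sum_i\max_{r_i\in R_i,\,a_i\in A_i,\,b_i\in B_i}(|a_i-r_i|^P-|b_i-r_i|^P)$. For fixed $r_i$, maximizing over $a_i\in A_i$ yields $\MaxDist(A_i,r_i)^P$ (the farther endpoint of $A_i$), and minimizing $|b_i-r_i|^P$ over $b_i\in B_i$ yields $\MinDist(B_i,r_i)^P$ (the nearest point of $B_i$). So the problem reduces to showing, coordinate-wise,
\begin{equation*}
\max_{r_i\in R_i}\bigl[\MaxDist(A_i,r_i)^P-\MinDist(B_i,r_i)^P\bigr]=\max_{r_i\in\{R_i^{min},R_i^{max}\}}\bigl[\MaxDist(A_i,r_i)^P-\MinDist(B_i,r_i)^P\bigr].
\end{equation*}

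The heart of the argument is this one-dimensional \emph{endpoint lemma}, which I would prove by showing that the function $g(r):=\MaxDist(A_i,r)^P-\MinDist(B_i,r)^P$ has no interior local maximum. I would split $\mathbb{R}$ by the breakpoints of $\MaxDist(A_i,\cdot)$ and $\MinDist(B_i,\cdot)$, namely $m_A:=(A_i^{min}+A_i^{max})/2$, $B_i^{min}$, and $B_i^{max}$. In each resulting region both $\MaxDist(A_i,r)$ and $\MinDist(B_i,r)$ are affine in $r$, so $g(r)=(|r-\alpha|)^P-(|r-\beta|)^P$ or $g(r)=(|r-\alpha|)^P$ for fixed $\alpha,\beta$, and a direct sign check of $g'$ (by cases on whether $r$ lies to the left or right of $\alpha,\beta$) shows $g$ is monotone on each region. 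At each region boundary I would compare one-sided derivatives: at $r=m_A$ the right derivative strictly exceeds the left by $2Pw_A^{P-1}$ (with $w_A:=(A_i^{max}-A_i^{min})/2$), so $m_A$ can only be a local minimum; at $r=B_i^{min}$ and $r=B_i^{max}$, the derivatives of $\MinDist^P$ match from both sides (since $\MinDist=0$ on $B_i$ and $t\mapsto t^P$ has derivative $0$ at $0$ for $P>1$, trivially for $P=1$ the contribution is linear and handled by region monotonicity). Hence $g$ is piecewise monotone with no interior local maxima, so its maximum on any compact interval is attained at an endpoint.

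The main obstacle is this endpoint lemma: $g$ is a difference of two convex functions, so neither convexity nor concavity is available globally, and one really needs the piecewise-linear structure of $\MaxDist$ and $\MinDist$ together with careful bookkeeping at the three transition points. Once it is in hand, combining both directions of the reformulation chain with the coordinate-wise reduction yields the equivalence in Equation \ref{eq:criterion}.
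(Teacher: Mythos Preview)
The paper does not actually give a proof of this theorem; its entire proof environment reads ``A detailed inference of Theorem~\ref{the:decision} can be found in \cite{emrich2010boosting}.'' So there is nothing in the present paper to compare against, and your proposal supplies precisely what the paper omits.

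Your argument is correct. The reduction chain---replacing the strict inequality by a strict bound on a maximum via compactness and monotonicity of $t\mapsto t^P$, then using the product structure of $R,A,B$ to split the maximum across coordinates, then optimizing out $a_i$ and $b_i$ to obtain $\MaxDist$ and $\MinDist$---is sound and is exactly how the original source \cite{emrich2010boosting} proceeds. The substantive step is indeed the one-dimensional endpoint lemma, and your piecewise analysis handles it: on each interval between consecutive breakpoints $\{m_A,B_i^{min},B_i^{max}\}$ the function $g(r)=\MaxDist(A_i,r)^P-\MinDist(B_i,r)^P$ is a difference of two monomials in affine arguments, hence monotone; at $r=m_A$ the one-sided derivatives of $\MaxDist^P$ jump upward, so $m_A$ cannot be a local maximum; and for $P>1$ the function $\MinDist(B_i,\cdot)^P$ is $C^1$ at $B_i^{min},B_i^{max}$, so no new local maxima appear there either.

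One small point worth tightening: your remark that for $P=1$ the boundary behaviour at $B_i^{min},B_i^{max}$ is ``handled by region monotonicity'' is a little quick, since at those points $g'$ genuinely jumps downward by $1$. The clean way to dispatch $P=1$ is to note that $\MaxDist(A_i,\cdot)$ has slope $-1$ on $(-\infty,m_A]$ and $+1$ on $[m_A,\infty)$, while $\MinDist(B_i,\cdot)$ has slope in $\{-1,0,1\}$; hence $g'\le 0$ on $(-\infty,m_A]$ and $g'\ge 0$ on $[m_A,\infty)$, so $g$ is quasi-convex and its maximum on any interval is at an endpoint. With that addition, the proof is complete.
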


\begin{proof}
A detailed inference of Theorem \ref{the:decision} can be found in \cite{emrich2010boosting}.
\end{proof}
\clearpage

\begin{example}
Let us revisit the example in Figure \ref{fig:minmax} using Euclidean distance ($P=2$), and assume the following coordinates of points and rectangles in this example: $a=(0,2)$, $b=(0,0)$, $R_{min}=(2,2)$, and $R_{max}=(10,4)$.
Using Equation \ref{eq:criterion} we obtain, for the first dimensions ($i=1$): 
$$
\max\limits_{r_1\in \{R_1^{min}, R_1^{max}\}}(\MaxDist
(A_1,r_1)^2 - \MinDist (B_1,r_1)^2)=
$$
$$
\max\limits_{r_i\in \{2,10\}}(\MaxDist
(A_1,r_1)^2 - \MinDist (B_1,r_1)^2)=
$$
$$
\max(\MaxDist
([0,0],2)^2 - \MinDist ([0,0],2)^2)), \MaxDist
([0,0],10)^2 - \MinDist ([0,0],10)^2)=
$$
$$
\max(4-4,100-100)=max(0,0)=0
$$
and for the second dimension we get:
$$
\max\limits_{r_2\in \{R_2^{min}, R_2^{max}\}}(\MaxDist
(A_2,r_2)^2 - \MinDist (B_2,r_2)^2)=
$$
$$
\max\limits_{r_i\in \{2,4\}}(\MaxDist
(A_2,r_2)^2 - \MinDist (B_2,r_2)^2)=
$$
$$
\max(\MaxDist
([2,2],2)^2 - \MinDist ([0,0],2)^2)), \MaxDist
([2,2],10)^2 - \MinDist ([0,0],10)^2)=
$$
$$
\max(0-4,64-100)=max(-4,-36)=-4
$$
The sum over the two dimensions yields $0-4=-4$, and since $-4<0$, the inequation of Equation \ref{eq:criterion} is satisfied, and thus, $Dom_R(a,b)$ holds.
\end{example}

\section{Implementation}
This section provides an implementation for the complete and sufficient spatial domination decision criterion sketched in Section \ref{sec:dom} and described in detail in \cite{emrich2010boosting}. Pseudocode can be found in Algorithm \ref{alg:Pseudo}, which takes three $d$-dimensional rectangle and an $L_P$ norm as input, and decides if $A$ spatially dominates $B$ with respect to $R$. The algorithm iterates of all dimensions in Line~2. For each dimension, the algorithm uses the minimum and maximum point of rectangle $R$, and compares minDist and maxDist of these points to rectangles $A$ and $B$ in lines~4-5. Algorithms to compute maxDist and minDist are shown in Algorithm~\ref{alg:maxdist} and Algorithm~\ref{alg:mindist}, respectively. 
The larger of the two is aggregated into the final result in lines~6-10.  

For an implementation of Algorithm~\ref{alg:Pseudo} in Python, see \url{https://github.com/azufle/Spatial_Gems_Spatial_Domination}.

\begin{algorithm}
\SetKwInOut{Input}{input}
\SetKwInOut{Output}{output}
\DontPrintSemicolon
\Input{$P$ \tcp*{Employed $L_P$ norm to measure distance}}
\Input{$d$ \tcp*{Number of dimensions}}
\Input{$A,B,R$ \tcp*{Three Rectangles in $R^d$.}}
\Output{A boolean predicate to decide if $A$ spatially dominates $B$ w.r.t. $R$}
\BlankLine
$sum \leftarrow 0$ \tcp*{Initialize the sum}
\For{$i \in 1:d$\tcp*{Iterate over each dimension}} { 
    $max1 \leftarrow MaxDist(A_i,R^{min}_i)^P-MinDist(B_i,R^{min}_i)^P$
    \tcp*{First argument of the maximum}
    $max2 \leftarrow MaxDist(A_i,R^{max}_i)^P-MinDist(B_i,R^{max}_i)^P$
    \tcp*{Second argument of the maximum}
    \uIf{$max1\geq max2$ \tcp*{Add the larger argument to the sum}} 
    {
        $sum\leftarrow sum+max1$
    }
    \uElse{$sum\leftarrow sum+max2$}
}
\Return $sum<0$ \tcp*{Return True if $sum<0$ and False otherwise}
\caption{Complete and Sufficient Domination of MBRs}
\label{alg:Pseudo}
\end{algorithm}

\begin{algorithm}[t]
\SetKwInOut{Input}{input}
\SetKwInOut{Output}{output}
\DontPrintSemicolon
\Input{$X=[X^{min},X^{max}]$ \tcp*{An interval from $X^{min}$ to $X^{max}$}}
\Input{$x$ \tcp*{A scalar in $\RR$}}
\Output{The maximum distance between $X$ and $x$ (for any point in $X$)}
\BlankLine
\uIf{$|x-X^{min}|\geq |x-X^{max}|$}
{
  \Return $|x-X^{min}|$
}
\uElse{
   \Return $|x-X^{max}|$
}
\caption{Maximum Distance between an Interval and a Scalar}
\label{alg:maxdist}
\end{algorithm}

\begin{algorithm}[t]
\SetKwInOut{Input}{input}
\SetKwInOut{Output}{output}
\DontPrintSemicolon
\Input{$X=[X^{min},X^{max}]$ \tcp*{An interval from $X^{min}$ to $X^{max}$}}
\Input{$x$ \tcp*{A scalar in $\RR$}}
\Output{The minimum distance between $X$ and $x$ (for any point in $X$)}
\BlankLine
\uIf{$x<X^{min}$}
{
  \Return $X^{min}-x$
}
\uElseIf{$x\leq X^{max}$ }
{
  \Return $0$ \tcp*{Case where $x\in X$}
}
\uElse{
   \Return $x-X^{max}$ \tcp*{Case where $x> X^{max}$}
}
\caption{Minimum Distance between an Interval and a Scalar}
\label{alg:mindist}
\end{algorithm}

\newpage
\section{Scenarios and Applications}
\label{sec:applications}
In this section, we will show how the concepts of spatial domination can be used to accelerate candidate pruning used in
similarity search and recommendation systems.


\subsection{Reverse Nearest Neighbor Search}
\label{subsec:Applications_RNN}
The reverse k-nearest neighbor query problem is given a point q, retrieve all the data points that have q as one of their k nearest neighbors. The geometrical pruning based solution of this problem introduced in \cite{tao2004reverse} overcomes the problem of other RkNN approaches including 1) support arbitrary values of k, 2) can efficiently deal with database updates, and 3) are applicable to arbitrary-dimensional feature spaces. The basic operation of determining if an object or a page region of a spatial index, e.g. an R-tree, can be pruned based on a bisecting hyperplane defined by two multi-dimensional points can be efficiently solved with our spatial domination solution.    

\subsection{Multi-Preference Recommendation}
Multi-preference recommendation and multi-criteria decision making based on top-k query processing has become a hot topic and has been studied extensively in the last two decades. In the context of multi criteria top-k query processing, computational geometry driven top-k query processing models have gained a lot of interest in recent years
\cite{mouratidis2019geometric,mouratidis2018exact,tang2017determining,yang2018querying,qian2015learning} and could be successfully adapted to several variants of top-k queries in multi-criteria settings
\cite{mouratidis2018exact}. The principal idea of the above mentioned line of work is to translate object domination according to multi-criteria preferences into hyperplane bisection of the multidimensional preference space.
A key operation shared by all these approaches is, given a halfspace $S$ defined by hyper-plane bi-section of the multidimensional preference space, and an axis parallel rectangle $R$, the determination of which of the following cases is true: case 1) $R$ is completely covered by $S$, case 2) $R$ intersects $S$, or case 3) $R$ does not intersect $S$ at all, as illustrated in Figure \ref{fig:r-domination}.  

\begin{figure}
    \centering
        \includegraphics[width=0.7\columnwidth]{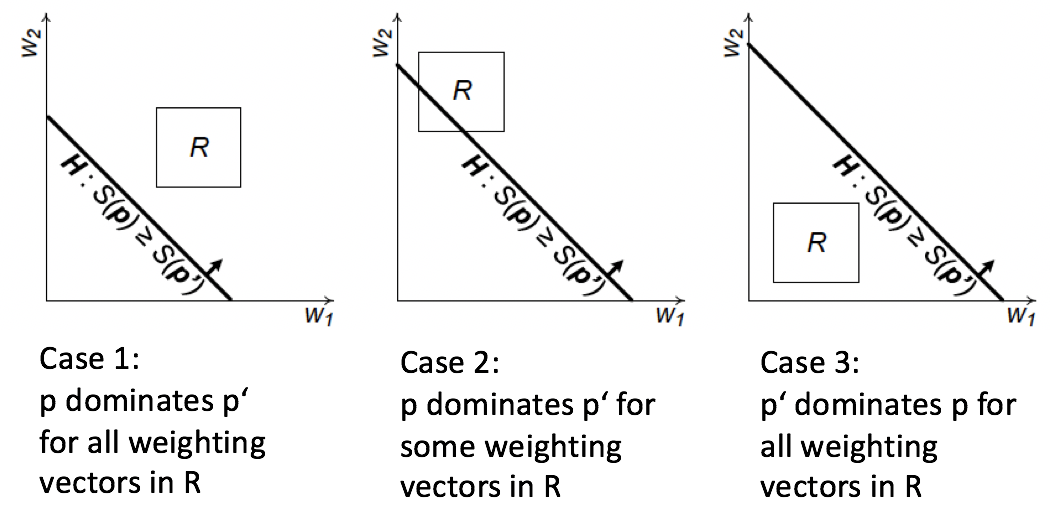}
    \caption{R-domination identification based on hyperplane bisection \cite{mouratidis2018exact}}
    \label{fig:r-domination}
\end{figure}



\bibliographystyle{unsrt}  
\bibliography{literature,references}  

%
%



\end{document}